\documentclass[a4paper, conference]{IEEEtran}
\usepackage[utf8]{inputenc}
\usepackage{amsmath,amssymb,amsthm}

\usepackage{bm}

\newtheorem{theorem}{Theorem}
\newtheorem*{lem}{Lemma}

\newtheorem*{cor}{Corollary}
\theoremstyle{remark}
\newtheorem*{rem}{Remark}

\newcommand{\mc}{\mathcal}
\newcommand{\mbb}{\mathbb}
\newcommand{\mbf}{\mathbf}

\title{Frequency hopping does not increase anti-jamming resilience of wireless channels}
\author{\IEEEauthorblockN{Moritz Wiese and Panos Papadimitratos}\IEEEauthorblockA{\\Networked Systems Security Group\\KTH Royal Institute of Technology, Stockholm, Sweden\\\{moritzw, papadim\}@kth.se}}

\begin{document}

\maketitle

\begin{abstract}
	The effectiveness of frequency hopping for anti-jamming protection of wireless channels is analyzed from an information-theoretic perspective. The sender can input its symbols into one of several frequency subbands at a time. Each subband channel is modeled as an additive noise channel. No common randomness between sender and receiver is assumed. It is shown that capacity is positive, and then equals the common randomness assisted (CR) capacity, if and only if the sender power strictly exceeds the jammer power. Thus compared to transmission over any fixed frequency subband, frequency hopping is not more resilient towards jamming, but it does increase the capacity. Upper and lower bounds on the CR capacity are provided.
\end{abstract}

\section{Introduction}

A wireless channel is open to inputs from anybody operating on the same frequency. Therefore communication has to be protected against deliberate jamming. This means that communication protocols have to be devised whose application enables reliable data transmission even if attacked by a jammer.

If a sufficiently broad frequency band is available, and if the jammer does not have simultaneous access to the complete band, a method which suggests itself is frequency hopping (FH). The frequency spectrum is divided into subbands. In each time slot, the sender chooses a subband in a random way and uses only that frequency to transmit data in that time slot. In some models \cite{EWFH,SPCCFH}, the receiver hops over frequencies, too, and only listens to one subband at a time. The idea is that in this way, the channel will not be jammed all the time with positive probability, and some information will go through.

To succeed, the basic FH idea requires common randomness known to sender and receiver, but unknown to the jammer. A careful analysis of that situation has been performed in \cite{EWFH}. It is clearly necessary that the common randomness realization be known before transmission starts. As the channel cannot be used to distribute this knowledge, this leads to a circle called anti-jamming/key-establishment dependency in \cite{SPCCFH}. 

In \cite{SPCCFH} it has been investigated for the first time whether FH can be used for data transmission without the availability of common randomness. Moreover, the jammer is allowed to distribute its power arbitrarily over all frequency subbands and use these simultaneously. It is assumed that whether the jammer inserts, modifies or jams messages only depends on the relation of its own and the sender's power. A protocol is found which achieves a positive throughput whose value depends on the jammer's strategies, e.g. whether or not it can listen to the sender's signals.

We take a different perspective in this work. The central figure of merit for our communication system is the message transmission error incurred under a jamming attack. A good FH protocol should make this error small. We assume that the jammer cannot listen to symbols sent through the channel (this in particular differs from \cite{SPCCFH}), that it knows the channel and the code, but not the specific message sent, and that it knows when the transmission of a new codeword begins. It can input symbols into any frequency subset of a given size. We also assume that the receiver listens to all frequencies simultaneously.

Within these boundaries, any jammer strategy is allowed. The jammer is successful if no coding strategy can be found making the transmission error vanish with increasing coding blocklength for any jamming strategy. This is an operational approach to measure the success of jamming, in contrast to the approach of \cite{SPCCFH} described above.

Using the information-theoretic model of an additive Arbitrarily Varying Channel (AVC) and the analysis in \cite{CsAVCgenalph}, we find that the success of a jammer indeed depends on the relation between its own and the sender's power. In fact, if the sender power is strictly larger than the jammer power, the same, positive capacity is achieved as in the case where sender and receiver have access to common randomness which is unknown to the jammer. If the converse relation between sender and jammer power holds, then no data transmission at all is possible. This is independent of the number $J$ of subchannels the jammer can influence at the same time.

On the other hand, it is known that for each frequency subband the same holds: If the jammer has more power than the sender, no communication is possible over this band, whereas the common randomness assisted capacity is achieved in case the sender power exceeds the jammer power. Thus in the case that no single frequency subband has a positive capacity without common randomness, then no FH scheme achieves a positive capacity either. Seen from this perspective, FH does not provide any additional protection against jamming compared to schemes which stick to one single frequency. However, FH does in general increase the common randomness assisted capacity compared to the use of one single subchannel, and hence also the capacity without common randomness if positive -- the FH sequence may depend on the message and thus reveal additional information. (In \cite{ZWLi,ZLii} this is called message-driven frequency hopping.) 

The common randomness assisted capacity will in general depend on the number $J$ of subchannels the jammer can simultaneously influence. Thus the capacity achievable without common randomness, if positive, also depends on $J$. We give a lower bound for the common randomness assisted capacity. If the noise is Gaussian and $J$ is sufficiently large, we also provide an upper bound which differs from the lower bound by the logarithm of the number of frequency bands. The bounds involve a waterfilling strategy for the distribution of the jammer's power over the frequencies.

\textit{Notation:} For any random variable $\xi$, we denote its distribution by $P_\xi$. The conditional distribution of a random variable $\xi$ given another random variable $\nu$ is denoted by $P_{\xi\vert\nu}$. 

\textit{Organization of the paper:} Section II presents the channel model and the main results. Sections III-VI contain the proofs of these results. A discussion concludes the paper in Section VII.

\section{System model and main results}

The total frequency band available for communication is divided into $K$ frequency subbands. These are modeled as parallel channels with additive noise. The receiver listens to all frequencies simultaneously. Frequency hopping (FH) means that the sender at each time instant chooses one of the $K$ subchannels into which it inputs a signal. For a fixed number $J$ with $1\leq J\leq K$, the jammer can at each time instant choose a subset $\mc I$ of the $K$ subchannels with $\lvert\mc I\rvert=J$ and input its own signals in subchannels belonging to this subset. 

The overall channel, called FH channel in the following, can be described as an additive Arbitrarily Varying Channel (AVC) with additive noise. For any $k\in\mc K=\{1,\ldots,K\}$, we set $(e_{k1},\ldots,e_{kK})^\top=\mbf e_k$ to be the vector with $e_{kk}=1$ and $e_{kl}=0$ for $l\neq k$. Further for any $\mc I$ with $\lvert\mc I\rvert=J$, we set $(e_{\mc I,1},\ldots,e_{\mc I,K})^\top=\mbf e_{\mc I}$ to be the vector satisfying $e_{\mc I,l}=1$ if $l\in\mc I$ and $e_{\mc I,l}=0$ else.

If the sender chooses symbol $x\in\mbb R$ to transmit over subchannel $k$, it inputs $x\mbf e_k$ into the channel. We denote the set $\mbb R\times\mc K$ by $\mc X$. The jammer choooses a subset $\mc I\subset\mc K$ of subchannels for possible jamming ($\lvert\mc I \rvert=J$) and a vector $(s_1,\ldots,s_K)^\top=\mbf s\in\mbb R^K$ of real numbers satisfying $s_l=0$ if $l\notin\mc I$. Then it inputs $\mbf s\circ\mbf e_{\mc I}$ into the channel, where the symbol $\circ$ denotes component-wise multiplication. We denote the set of possible jammer choices by
\[
	\mc S:=\bigcup_{\substack{\mc I\subset\mc K:\lvert\mc I\rvert=J}}\{\mc I\}\times\{\mbf s\in\mbb R^K:l\in\mc I\Rightarrow s_l=0\}
\]

The noise on different frequencies is assumed to be independent. Thus the noise probability distribution of the overall channel is determined by the noise distributions on the subchannels. For subchannel $k$, let $N_k$ be the noise random variable. Its mean is assumed to be zero and its variance is denoted by $\sigma_k^2$. The random vector $(N_1,\ldots,N_K)^\top$ is denoted by $\mbf N$. 

Given sender input $x\mbf e_k$ and jammer input $\mbf s\circ\mbf e_{\mc I}$, the receiver obtains a real $K$-dimensional output vector $(y_1,\ldots,y_K)^\top=\mbf y$ through the FH channel which satisfies
\[
	\mbf y=x\mbf e_k+\mbf s\circ\mbf e_{\mc I}+\mbf N.
\]
In particular, on frequencies without sender or jammer inputs, the output is pure noise. The channel is memoryless over time, i.e.\ outputs at different time instants are independent conditional on the sender and jammer inputs. Note that this is an additive AVC, but as its input alphabet is a strict subset of $\mbb R^K$, the special results of \cite{CsAVCgenalph} on additive-noise AVCs do not apply here. The general theory developed in \cite{CsAVCgenalph} is applicable, though: All alphabets involved are complete, separable metric spaces\footnote{Giving a discrete set $\mc K$ the metric $\rho(k,l)=1$ if $k\neq l$ and $\rho(k,k)=0$ for all $k,l\in\mc K$ makes $\mc K$ a complete metric space whose Borel algebra is its complete power set.}, the channel output distribution continuously depends on the sender and jammer inputs, and the constraints on sender and jammer inputs to be defined below are continuous. Hence the central hypotheses (H.1)-(H.4) of \cite{CsAVCgenalph} are satisfied.

The protocols used for data transmission are block codes. A blocklength-$n$ code is defined as follows. We assume without loss of generality that the set of messages $\mc M_n$ is the set $\{1,\ldots,\lvert\mc M_n\rvert\}$. An encoder is a mapping $f_n$ from $\mc M_n$ into the set of sequences of sender channel inputs of length $n$,
\[
	\{(x_1\mbf e_{k_1},\ldots,x_n\mbf e_{k_n}):(x_i,k_i)\in\mc X\;(1\leq i\leq n)\}.
\]
Note that this means that the sequence of frequency bands used by the sender may depend on the message to be sent. Every codeword can be considered as a $K\times n$-matrix whose $i$-th column is the $i$-th channel input vector. The decoder at blocklength $n$ is a mapping $\varphi_n:\mbb R^{K\times n}\longrightarrow\mc M_n$. 

Additionally, for some $\Gamma>0$, the sender has the power constraint $\sum_{i=1}^n\lVert f_n(m)_i\rVert^2\leq n\Gamma$ for all $m\in\mc M_n$, where $f_n(m)_i$ denotes the $i$-th column of the $K\times n$-matrix $f_n(m)$ and $\lVert\cdot\rVert$ denotes the Euclidean norm on $\mbb R^K$.  A code $(f_n,\varphi_n)$ with blocklength $n$ which satisfies the power constraint for $\Gamma$ is called an $(n,\Gamma)$-code.

We are interested in the transmission error incurred by a code $(f_n,\varphi_n)$. This error should be small for all possible jammer input sequences. Thus we first define the transmission error for a given length-$n$ jamming sequence $((\mc I_1,\mbf s_1),\ldots,(\mc I_n,\mbf s_n))$. This sequence can be given matrix form as well. We denote by $\tilde S$ the $K\times n$-matrix whose $i$-th column equals $\mbf s_i$. By $\tilde E\in\mbb R^{K\times n}$, we denote the matrix with columns $\mbf e_{\mc I_1},\ldots,\mbf e_{\mc I_n}$. Of course, $\tilde S\circ \tilde E=\tilde S$. We keep $\tilde E$ explicit because $\tilde S$ itself does not in general uniquely determine the sequence $(\mc I_1,\ldots,\mc I_n)$, as some components of $\mbf s_i$ could be zero $(1\leq i\leq n)$.

Just like the sender, the jammer has a power constraint. We require that $\sum_{i=1}^n\lVert\mbf s_i\rVert^2\leq n\Lambda$ for some $\Lambda>0$ and denote the set of $\tilde S\circ\tilde E$ satisfying this power constraint by $\mc J_\Lambda$. It is clear that a realistic jammer cannot transmit at arbitrarily large powers, so this is a reasonable assumption. Note that the jammer is free to distribute its power over the subchannel subset it has chosen for jamming. In particular, the power can be concentrated on one single frequency no matter what $J$ is. 

Now let $(f_n,\varphi_n)$ be a blocklength-$n$ code and $\tilde S\circ\tilde  E\in\mbb R^{K\times n}$ a jammer input. Then the average error incurred by $(f_n,\varphi_n)$ under this jamming sequence is defined to equal
\begin{multline*}
	\bar e(f_n,\varphi_n,\tilde S\circ\tilde E)\\=\frac{1}{\lvert\mc M_n\rvert}\sum_{m\in\mc M_n}\mbb P[\varphi_n(f_n(m)+\tilde S\circ\tilde  E+\tilde N)\neq m],
\end{multline*}
where $\tilde N$ is a matrix whose columns are $n$ independent copies of the noise random vector $\mbf N$. The overall transmission error for $(f_n,\varphi_n)$ under jammer power constraint $\Lambda$ is given by
\[
	\bar e(f_n,\varphi_n,\Lambda)=\sup_{\tilde S\circ\tilde E\in\mc J_\Lambda}\bar e(f_n,\varphi_n,\tilde S\circ\tilde E).
\]
This error criterion makes the FH channel an AVC.

A nonnegative real number is said to be an \textit{achievable rate} under sender power constraint $\Gamma$ and jammer power constraint $\Lambda$ if there exists a sequence of codes $((f_n,\varphi_n))_{n=1}^\infty$, where $(f_n,\varphi_n)$ is an $(n,\Gamma)$-code, satisfying
\begin{align*}
	\liminf_{n\rightarrow\infty}\frac{1}{n}\log\lvert\mc M_n\rvert&\geq R,\\
	\lim_{n\rightarrow\infty}\bar e(f_n,\varphi_n,\Lambda)&=0.
\end{align*}
The supremum $C(\Gamma,\Lambda)$ of the set of achievable rates under power constraints $\Gamma$ and $\Lambda$ is called the $(\Gamma,\Lambda)$-\textit{capacity} of the channel.

Now we ask under which conditions the $(\Gamma,\Lambda)$-capacity of the FH channel is positive, and in case it is positive, how large it is. A precise statement can be made upon introduction of the common randomness assisted capacity $C_r(\Gamma,\Lambda)$. This is the maximal rate achievable if sender and receiver have a common secret key unknown to the jammer. The key size is not restricted. As noted in the introduction, the presence of a certain amount of common randomness is a frequent assumption in the literature on frequency hopping.

For given power constraint $\Gamma>0$, we describe a common randomness assisted $(n,\Gamma)$-code as a random variable $(F_n,\Phi_n)$ on the set of $(n,\Gamma)$-codes with common message size and $(F_n,\Phi_n)$ independent of channel noise. The error it incurs under jamming sequence $\tilde S\circ\tilde E$ is defined to equal the mean $\mbb E[\bar e(F_n,\Phi_n,\tilde S\circ\tilde E)]$ over all possible realizations of $(F_n,\Phi_n)$, and the overall transmission error under jammer power constraint $\Lambda>0$ is set to equal
\[
	\sup_{\tilde S\circ\tilde E\in\mc J_\Lambda}\mbb E[\bar e(F_n,\Phi_n,\tilde S\circ\tilde E)].
\]
The definition of common randomness assisted achievable rate under power constraints $\Gamma$ and $\Lambda$ is now a straightforward extension of the corresponding notion for the deterministic case. The supremum of all common randomness assisted rates under power constraints $\Gamma$ and $\Lambda$ is called the common randomness assisted $(\Gamma,\Lambda)$-capacity and denoted by $C_r(\Gamma,\Lambda)$.

\begin{theorem}
	$C(\Gamma,\Lambda)$ is positive if and only if $\Gamma>\Lambda$. If it is positive, it equals $C_r(\Gamma,\Lambda)$.
\end{theorem}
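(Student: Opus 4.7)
The plan is to apply the general AVC dichotomy of \cite{CsAVCgenalph}: under hypotheses (H.1)--(H.4), which the excerpt has already verified, the deterministic capacity $C(\Gamma,\Lambda)$ coincides with the randomized capacity $C_r(\Gamma,\Lambda)$ whenever the AVC is not $\Lambda$-symmetrizable, while $C(\Gamma,\Lambda)=0$ whenever it is. The theorem then reduces to two items: (i) the equivalence ``$\Gamma > \Lambda$ iff non-symmetrizable''; and (ii) the positivity $C_r(\Gamma,\Lambda)>0$ whenever $\Gamma>\Lambda$.

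For the easy direction of (i), I would exhibit a ``mimicking'' jammer kernel when $\Lambda\geq\Gamma$: on sender input $x_0\mbf e_k$ let $U(\cdot\mid x_0\mbf e_k)$ be the point mass at $(\mc I,\mbf s)$ with $\mbf s=x_0\mbf e_k$ and $\mc I$ any fixed $J$-subset of $\mc K$ containing $k$. Given any two sender inputs $x_0\mbf e_k$ and $x_0'\mbf e_{k'}$, the channel output $x_0\mbf e_k + x_0'\mbf e_{k'}+\mbf N$ is invariant under exchanging them, so the symmetrizing identity holds; and since the jammer's squared norm is $x_0^2\leq\Gamma\leq\Lambda$ pointwise, the constraint is respected, yielding $C(\Gamma,\Lambda)=0$.

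For the converse direction of (i), I would suppose towards a contradiction that a $\Lambda$-valid symmetrizing kernel $U$ exists even though $\Gamma>\Lambda$. Let $\mbf V_i$ denote the random jamming vector $\mbf s\circ\mbf e_{\mc I}$ produced by $U(\cdot\mid x_i)$, taking the test inputs $x_1=\sqrt{\Gamma}\mbf e_1$ and $x_2=-\sqrt{\Gamma}\mbf e_1$. The symmetrizing identity then reads $\sqrt{\Gamma}\mbf e_1+\mbf V_2+\mbf N\stackrel{d}{=}-\sqrt{\Gamma}\mbf e_1+\mbf V_1+\mbf N$; taking expectations (using $\mbb E\mbf N=\mbf 0$ and independence of the noise) gives $\mbb E\mbf V_1-\mbb E\mbf V_2=2\sqrt{\Gamma}\mbf e_1$, and Jensen combined with the parallelogram inequality yields
\[
	\mbb E\lVert\mbf V_1\rVert^2+\mbb E\lVert\mbf V_2\rVert^2 \;\geq\; \tfrac{1}{2}\lVert\mbb E\mbf V_1-\mbb E\mbf V_2\rVert^2 \;=\; 2\Gamma.
\]
Averaging over the uniform sender distribution on $\{x_1,x_2\}$, whose own power is exactly $\Gamma$, the induced jamming power is at least $\Gamma>\Lambda$, contradicting admissibility of $U$.

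For (ii), positivity of $C_r(\Gamma,\Lambda)$ follows by restricting transmission to a single fixed subchannel $k$ and employing an i.i.d.\ Gaussian random codebook against the worst additive jammer plus noise: the resulting one-dimensional constrained additive-noise AVC has positive random-code capacity whenever $\Gamma>0$, by the classical Csisz\'ar--Narayan theory. The main obstacle I anticipate is matching the argument above to the precise constrained-symmetrizability definition of \cite{CsAVCgenalph}---in particular confirming that the jammer's power constraint on $U$ is the averaged-over-inputs version I have used, and that the mean-taking step in the converse is available under the integrability assumptions of that framework (which is why a moment-based bound is preferable to a deconvolution/characteristic-function approach that would require extra structure on $\mbf N$).
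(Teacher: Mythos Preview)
Your proposal rests on an ``Ahlswede dichotomy'' for general-alphabet AVCs---``$C=C_r$ if not $\Lambda$-symmetrizable, $C=0$ if it is''---but no such statement is available in \cite{CsAVCgenalph}; the paper itself stresses this in Section~\ref{sect:disc}. Consequently your easy direction of (i) does not close: exhibiting a symmetrizing kernel when $\Gamma\leq\Lambda$ is not enough to conclude $C(\Gamma,\Lambda)=0$ in this framework. The paper instead proves the converse by a direct coding argument (after \cite{CNAVCGauss}): for any $(n,\Gamma)$-code with at least two messages, the jammer who transmits a uniformly chosen codeword (admissible since $\Gamma\leq\Lambda$) forces average error $\geq 1/4$. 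Your kernel is morally the one-shot version of this strategy, but you still need to lift it to a blocklength-$n$ error bound without appealing to a dichotomy.

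For the direct part your moment/parallelogram computation is the right kind of estimate, but the criterion actually supplied by \cite{CsAVCgenalph} (the corollary to its Theorem~4) is finer than ``not $\Lambda$-symmetrizable'': one must exhibit, for every member of a suitable family $\mc F$ of finite input sets $\tilde{\mc X}$, an input law $(X,\kappa)$ \emph{achieving} $C_{r,\tilde{\mc X}}(\Gamma,\Lambda)$ with $\tau_{\tilde{\mc X}}(X,\kappa,\Lambda)>1$. Your two test points $\pm\sqrt{\Gamma}\,\mbf e_1$ need not be capacity-achieving on $\tilde{\mc X}$, so the bound $\frac12(\mbb E\lVert\mbf V_1\rVert^2+\mbb E\lVert\mbf V_2\rVert^2)\geq\Gamma$ does not feed into the criterion. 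The paper fills this gap with a lemma: on each $\tilde{\mc X}\in\mc F$ there is a $C_{r,\tilde{\mc X}}$-optimal $(X,\kappa)$ that uses full power $\mbb E[X^2]=\Gamma$ and is coordinate-wise symmetric ($P_{X\mid\kappa}=P_{-X\mid\kappa}$); then, for any symmetrizing $U$, the same Jensen-plus-mean argument you use gives $\mbb E\lVert\mbf Z^U_{X,\kappa}\rVert^2\geq\mbb E[X^2]=\Gamma$, hence $\tau>1$ whenever $\Gamma>\Lambda$. (Full power requires knowing $C_r(\Gamma,\Lambda)$ is strictly increasing in $\Gamma$, which is where Theorem~2 is invoked.) Your item (ii) is fine and matches the paper's argument.
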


\begin{cor}
	\begin{enumerate}
		\item If $C(\Gamma,\Lambda)>0$, then every fixed-frequency subchannel also has a positive capacity. In this sense FH is not necessary to achieve a positive rate. 
		\item If $C(\Gamma,\Lambda)>0$, then common randomness does not increase the maximal transmission rate.
	\end{enumerate}
\end{cor}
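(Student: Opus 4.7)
For part (2) essentially no work is needed: Theorem 1 already asserts that whenever $C(\Gamma,\Lambda)>0$ it coincides with $C_r(\Gamma,\Lambda)$, which is precisely the statement that common randomness does not increase the maximal rate. So the entire plan concerns part (1).

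For part (1), the plan is to reduce to a single subchannel and invoke the known capacity dichotomy for the scalar additive AVC under quadratic constraints. By Theorem 1, the hypothesis $C(\Gamma,\Lambda)>0$ forces $\Gamma>\Lambda$. Fix any $k\in\mc K$. Restricting the sender's frequency choice to the constant value $k$ and allowing the jammer to place all its energy on that same coordinate (which it may, since $J\geq 1$ permits $\mc I\ni k$ and the quadratic constraint $\sum_i\lVert\mbf s_i\rVert^2\leq n\Lambda$ is on the total energy) yields a scalar additive-noise AVC with sender power $\Gamma$, jammer power $\Lambda$, and zero-mean noise of variance $\sigma_k^2$. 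The additive-AVC theory of \cite{CsAVCgenalph} shows that the deterministic-code capacity of such a channel is positive precisely when sender power strictly exceeds jammer power, i.e.\ exactly when $\Gamma>\Lambda$. Under our hypothesis this holds, so subchannel $k$ already has positive capacity, which establishes (1).

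The only step that really requires a moment of care, and hence the one I expect to be the main obstacle, is the verification that the restriction of the FH model to a fixed subchannel really is the textbook scalar additive AVC with the \emph{full} powers $\Gamma$ and $\Lambda$. This boils down to observing that the sender's per-codeword constraint $\sum_i\lVert f_n(m)_i\rVert^2\leq n\Gamma$ reduces to $\sum_i x_i^2\leq n\Gamma$ once the encoder only visits subchannel $k$, and that the jammer's energy budget allows the concentration of the entire $\Lambda$ on the single coordinate $k$ (nothing in $\mc S$ or $\mc J_\Lambda$ prevents this, again because $J\geq 1$). Once these elementary reductions are made, both parts of the corollary follow as direct specialisations of Theorem 1, so no deeper technical work is expected.
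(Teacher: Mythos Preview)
Your proposal is correct and matches the paper's own proof essentially step for step: part (2) is immediate from Theorem 1, and part (1) follows by using Theorem 1 to obtain $\Gamma>\Lambda$ and then invoking the scalar additive-AVC dichotomy of \cite[Theorem 5]{CsAVCgenalph} on each subchannel. The paper additionally records the explicit lower bound $\tfrac{1}{2}\log\bigl(1+\Gamma/(\sigma_k^2+\Lambda)\bigr)$ for the subchannel capacity, but this is cosmetic and your reduction to the textbook scalar additive AVC is exactly the intended argument.
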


For $\Gamma>\Lambda$, it is thus desirable to have bounds on $C_r(\Gamma,\Lambda)$. These can be provided for all pairs $(\Gamma,\Lambda)$. Note that the choice of $\Lambda_1,\ldots,\Lambda_K$ below is a waterfilling strategy.

\begin{theorem}
\begin{enumerate}
	\item Let $\Lambda_1,\ldots,\Lambda_K$ be nonnegative numbers satisfying
	\[
		\begin{cases}
			\sigma_k^2+\Lambda_k=c	&\text{if }\sigma_k^2<c,\\
			\Lambda_k=0 &\text{if }\sigma_k^2\geq c
		\end{cases}
	\]
	with $c$ such that $\Lambda_1+\cdots+\Lambda_K=\Lambda$. Then
\begin{align}\label{eq:lower}
	C_r(\Gamma,\Lambda)\geq\frac{1}{2}\log\left(1+\frac{\Gamma}{c}\right).
\end{align}
	In particular, $C_r(\Gamma,\Lambda)>0$.
	\item If the noise is Gaussian and $J\geq\lvert\{k\in\mc K:\sigma_k^2<c\}\rvert$, then
	\begin{equation}\label{eq:upper}
		C_r(\Gamma,\Lambda)\leq\frac{1}{2}\log\left(1+\frac{\Gamma}{c}\right)+\log K.
	\end{equation}
\end{enumerate}
\end{theorem}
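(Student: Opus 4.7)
Both halves rest on the single-letter characterization of CR-assisted capacity supplied by the general AVC theory of \cite{CsAVCgenalph},
\[
	C_r(\Gamma,\Lambda)=\max_{P_{X,K}\,:\,\mbb E[X^2]\leq\Gamma}\;\min_{P_{\mbf S,\mc I}\,:\,\mbb E[\|\mbf S\|^2]\leq\Lambda}\;I((X,K);\mbf Y),
\]
whose single-letter form is licensed by (H.1)--(H.4), already checked. Set $\mc K_0:=\{k:\sigma_k^2<c\}$ and $K_0:=|\mc K_0|$; the waterfilling identity $\sum_{k\in\mc K_0}(\sigma_k^2+\Lambda_k)=K_0\,c$ will drive both directions.

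For the lower bound I would insert the concrete input distribution $K\sim\mathrm{Unif}(\mc K_0)$ with $X\sim\mc N(0,\Gamma)$ independent of $K$ and evaluate the inner minimum by hand. Fix any jammer distribution and set $\bar\Lambda_k:=\mbb E[S_k^2]$, so $\sum_k\bar\Lambda_k\leq\Lambda$. Conditioning on $K$ and retaining only the coordinate $Y_K=X+S_K+N_K$ gives
\[
	I((X,K);\mbf Y)\geq I(X;Y_K\mid K)=\frac{1}{K_0}\sum_{k\in\mc K_0}I(X;X+S_k+N_k).
\]
Since $X$ is Gaussian and independent of the zero-mean perturbation $S_k+N_k$ of variance $\sigma_k^2+\bar\Lambda_k$, the standard linear-MMSE / maximum-entropy inequality delivers $I(X;X+S_k+N_k)\geq\tfrac{1}{2}\log(1+\Gamma/(\sigma_k^2+\bar\Lambda_k))$. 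The map $t\mapsto\log(1+\Gamma/t)$ is convex and decreasing, so Jensen's inequality combined with $\tfrac{1}{K_0}\sum_{k\in\mc K_0}(\sigma_k^2+\bar\Lambda_k)\leq c$ (which follows from $\sum_{k\in\mc K_0}\bar\Lambda_k\leq\Lambda$ and the waterfilling identity) yields $\tfrac12\log(1+\Gamma/c)$ uniformly in the jammer.

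For the Gaussian-noise upper bound under $J\geq K_0$ the plan is dual: pin down a worst-case jammer and maximize over the input. Use $J\geq K_0$ to pick some $\mc I\supseteq\mc K_0$ and have the jammer emit independent zero-mean Gaussians of variance $\Lambda_k$ on each $k\in\mc K_0$. The composite noise on every subchannel is then Gaussian of variance $\sigma_k^2+\Lambda_k\geq c$ and its coordinates are independent. For any admissible $P_{X,K}$,
\[
	I((X,K);\mbf Y)=I(K;\mbf Y)+I(X;\mbf Y\mid K)\leq\log K+I(X;\mbf Y\mid K),
\]
while coordinate independence of the noise and the scalar Gaussian-channel bound give $I(X;\mbf Y\mid K=k)=I(X;Y_k\mid K=k)\leq\tfrac12\log(1+\mbb E[X^2\mid K=k]/c)$. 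Averaging over $K$ and invoking concavity of $\log$ condenses the right-hand side to $\tfrac12\log(1+\Gamma/c)$, which together with the $\log K$ term closes the bound.

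The principal subtlety sits in the lower bound: one must verify that a jammer who diverts power to the ``bad'' subchannels $\sigma_k^2\geq c$ cannot help itself, but this is immediate from the fact that only $\sum_{k\in\mc K_0}\bar\Lambda_k$ enters the estimate and $\log(1+\Gamma/\cdot)$ is decreasing, so such diversion only strengthens the bound. The upper bound is essentially a two-line calculation once one recognises that $H(K)\leq\log K$ quantifies exactly the ``message-driven hopping'' bonus, accounting for the logarithmic slack between the two bounds.
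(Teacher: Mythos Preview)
Your argument is correct, and for the upper bound it is essentially identical to the paper's: both pick the Gaussian waterfilling jammer on $\mc K_0\subset\mc I$, use the chain-rule decomposition $I=I(\kappa;\mbf Y)+I(X;\mbf Y\mid\kappa)$, reduce $I(X;\mbf Y\mid\kappa=k)$ to the scalar Gaussian channel on coordinate $k$, and finish with concavity of $\log(1+\cdot)$.

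For the lower bound you take a genuinely different route. The paper works with the $\min$--$\sup$ form of $C_r$: for a \emph{given} jammer with induced coordinate powers $(\Lambda_k)_k$, it lets the sender place all power on the single best subchannel, obtaining $\min_{\bm\Lambda}\max_k\tfrac12\log\bigl(1+\Gamma/(\sigma_k^2+\Lambda_k)\bigr)$, and then argues that the inner maximum forces the jammer into waterfilling. Your approach instead works with the $\sup$--$\min$ form: you \emph{fix} the input $K\sim\mathrm{Unif}(\mc K_0)$, $X\sim\mc N(0,\Gamma)$, and handle every jammer simultaneously via convexity of $t\mapsto\log(1+\Gamma/t)$ together with the budget identity $\tfrac{1}{K_0}\sum_{k\in\mc K_0}(\sigma_k^2+\bar\Lambda_k)\leq c$. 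This buys you a clean one-shot estimate that sidesteps the ``waterfilling is optimal for the jammer'' verification. What the paper's route buys is the observation, important for its narrative, that the lower bound is already attained by a \emph{non-hopping} input on a single subchannel; your uniform-hopping input obscures that point. One small wording issue: you call $S_k+N_k$ a ``zero-mean perturbation,'' but the jammer is not required to have mean zero. This is harmless, since the Gaussian--worst-noise inequality holds with the second moment $\mbb E[S_k^2]+\sigma_k^2$ in the denominator (variance can only be smaller), which is exactly the $\bar\Lambda_k+\sigma_k^2$ you use.
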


\begin{rem}
\begin{enumerate}
	\item Set $\mc K':=\{k\in\mc K:\sigma_k^2<c\}$. As comparison with \eqref{eq:upper} shows, \eqref{eq:lower} is a good bound if $J\geq\lvert\mc K'\rvert$ and the noise is Gaussian. The lack of a similar bound for the case $J<\lvert\mc K'\rvert$ can be explained by the fact that the jammer in this case has to leave some of the highest-throughput subchannels unjammed. $C_r(\Gamma,\Lambda)$ in general depends on $J$, and should increase for decreasing $J$. 
	\item The proof of Theorem 2 shows that the $\frac{1}{2}\log(1+\frac{\Gamma}{c})$  terms in \eqref{eq:lower}, \eqref{eq:upper} are achievable without frequency hopping, whereas frequency hopping contributes at most $\log K$ bits to capacity. According to the lower bound, the common randomness assisted capacity grows to infinity as $\Lambda$ is kept fixed and $\Gamma$ tends to infinity. Thus asymptotically for large $\Gamma$, the relative contribution to $C_r(\Gamma,\Lambda)$ of information transmitted through the FH sequence vanishes.
	\item Non-trivial frequency hopping will in general be necessary both to achieve $C_r(\Gamma,\Lambda)$ and $C(\Gamma,\Lambda)$. Although we will not prove this, this is implied by the mutual information characterization of $C_r(\Gamma,\Lambda)$ (see the proof of Theorem 2).
\end{enumerate}
\end{rem}

\section{Proof of Theorem 2}

Although Theorem 1 and its corollary are our main results, we first prove Theorem 2, which is needed for the proof of Theorem 1. From \cite[Theorem 4]{CsAVCgenalph} it follows that
\begin{align*}
	C_r(\Gamma,\Lambda)
	&=\sup_{\substack{(X,\kappa):\\\mbb E[X^2]\leq\Gamma}}\min_{\substack{(\iota,\mbf S):\\\mbb E[\lVert\mbf S\rVert^2]\leq\Lambda}}I(X\mbf e_\kappa;X\mbf e_\kappa+\mbf S\circ\mbf e_\iota+\mbf N)\\
	&=\min_{\substack{(\iota,\mbf S):\\\mbb E[\lVert\mbf S\rVert^2]\leq\Lambda}}\sup_{\substack{(X,\kappa):\\\mbb E[X^2]\leq\Gamma}}I(X\mbf e_\kappa;X\mbf e_\kappa+\mbf S\circ\mbf e_\iota+\mbf N).
\end{align*}
Here $X\mbf e\kappa$ is a random variable on the possible sender inputs determined by an $\mc X$-valued random pair $(X,\kappa)$. Similarly, $\mbf S\circ\mbf e_\iota$ is the jammer's random channel input determined by a random $\mc S$-valued pair $(\iota,\mbf S)$ independent of $(X,\kappa)$.

Define $\mbf Y=X\mbf e_\kappa+\mbf S\circ\mbf e_\iota+\mbf N$. The expression $I(X\mbf e_\kappa;\mbf Y)$ is concave in the distribution $P_\kappa$ of $\kappa$ and convex in the distribution $P_\iota$ of $\iota$. Therefore the sender will in general have to use frequency hopping to approach capacity and likewise, the jammer will not stick to one constant frequency subset $\mc I$ for jamming.

The mutual information term appearing in the above formula for $C_r(\Gamma,\Lambda)$ can be written as 
\begin{align}
	I(X\mbf e_\kappa;\mbf Y)
	&=I(X\mbf e_\kappa,\kappa;\mbf Y)-I(\kappa;\mbf Y\vert X\mbf e_\kappa)\notag\\
	&=I(X;\mbf Y\vert\kappa)+I(\kappa;\mbf Y),\label{eq:muti-expr}
\end{align}
upon application of the chain rule in each of the equalities and observing that the sequence $\kappa\leftrightarrow X\mbf e_\kappa\leftrightarrow\mbf Y$ is Markov. 

The second term in \eqref{eq:muti-expr} is between 0 and $\log K$. Thus to bound $C_r(\Gamma,\Lambda)$, it remains to bound 
\begin{align}\label{eq:Gleichheit}
	&\hphantom{\mathrel{=}}\min_{\substack{(\iota,\mbf S):\\\mbb E[\lVert\mbf S\rVert^2]\leq\Lambda}}\sup_{\substack{(X,\kappa):\\\mbb E[X^2]\leq\Gamma}}I(X;\mbf Y\vert\kappa)\\
	&=\min_{\substack{(\iota,\mbf S):\\\mbb E[\lVert\mbf S\rVert^2]\leq\Lambda}}\sup_{(\kappa,\bm\Gamma)}\sum_{k=1}^KP_\kappa(k)\sup_{X:\mbb E[X^2\vert\kappa=k]\leq\Gamma_k}I(X;\mbf Y\vert\kappa=k),\notag
\end{align}
where the supremum over $(\kappa,\bm\Gamma)$ is over $\kappa$ and nonnegative vectors $\bm\Gamma=(\Gamma_1,\ldots,\Gamma_K)$ satisfying $\sum P_\kappa(k)\Gamma_k\leq\Gamma$.

We continue with the proof of the lower bound. For any $k\in\mc K$,
\begin{equation}\label{eq:projection}
	I(X;\mbf Y\vert\kappa=k)\\\geq I(X;Y_k\vert\kappa=k).
\end{equation}
Fix any $(\iota,\mbf S)$ with $\mbb E[\lVert\mbf S\rVert^2]\leq\Lambda$. Let $S_{\mc I,k}$ be distributed according to the projection onto the $k$-th coordinate of $P_{\mbf S\vert\iota}[\cdot\vert\iota=\mc I]$ and denote the second moment of $S_{\mc I,k}$ by $\Lambda_{\mc I,k}$. Note that $\Lambda_{\mc I,k}=0$ if $k\notin\mc I$. The $k$-th coordinate output of the FH channel conditional on the event $\kappa=k$ has the form
\begin{equation}\label{eq:k-channel}
	y_k=x+Z_k,
\end{equation}
where $Z_k$ is a real-valued random variable whose distribution equals
\[
	P_{Z_k}=P_\iota(\{\mc I:k\notin\mc I\})P_{N_k}+\sum_{\mc I:k\in\mc I}P_\iota(\mc I)P_{N_k+S_{\mc I,k}}
\]
If we set $\Lambda_k:=\sum_{\mc I}P_\iota(\mc I)\Lambda_{\mc I,k}$, then $Z_k$ has the variance $\sigma_k^2+\Lambda_k$. Observe that $\Lambda_1+\cdots+\Lambda_K\leq\Lambda$. As \eqref{eq:k-channel} is an additive channel with the real numbers as input and output alphabet, it is a well-known fact \cite[Theorem 7.4.3]{Gallager} that
\[
	\sup_{\substack{X:\mbb E[X^2\vert\kappa=k]\leq\Gamma_k}}I(X;Y_k\vert\kappa=k)\\\geq\frac{1}{2}\log\left(1+\frac{\Gamma_k}{\sigma_k^2+\Lambda_{k}}\right).
\]
Hence the right-hand side of \eqref{eq:Gleichheit} can be lower-bounded by
\begin{equation}\label{eq:Gauss-absch}
	\min_{\bm\Lambda}\max_{(\kappa,\bm\Gamma)}\frac{1}{2}\sum_{k=1}^KP_\kappa(k)\log\left(1+\frac{\Gamma_k}{\sigma_k^2+\Lambda_{k}}\right),
\end{equation}
where the minimum is over vectors $\bm\Lambda=(\Lambda_1,\ldots,\Lambda_K)$ with nonnegative components satisfying $\Lambda_1+\cdots+\Lambda_K\leq\Lambda$. By choosing $\kappa$ to be constant and equal to the $k$ corresponding to the maximal $\log$-term in \eqref{eq:Gauss-absch} and by putting all power $\Gamma$ onto this $k$, \eqref{eq:Gauss-absch} is lower-bounded by
\begin{align}
	\min_{\Lambda_1+\cdots+\Lambda_K\leq\Lambda}\max_k\frac{1}{2}\log\left(1+\frac{\Gamma}{\sigma_k^2+\Lambda_{k}}\right).\label{eq:rhslower}
\end{align}
By this choice of $\kappa$, \eqref{eq:rhslower} is obtained without frequency hopping. It is now straightforward to show by comparison that waterfilling for the jammer is the optimal choice of $\Lambda_1,\ldots,\Lambda_K$ in \eqref{eq:rhslower}. This bound on \eqref{eq:Gleichheit} together with \eqref{eq:muti-expr} proves \eqref{eq:lower}.

Next we prove the upper bound \eqref{eq:upper}. Assume that all noise random variables are Gaussian. It is sufficient to upper-bound \eqref{eq:Gleichheit}. We are now free to choose any $(\iota,\mbf S)$ obeying the second moment condition. Thus let $\Lambda_1,\ldots,\Lambda_K$ satisfy the waterfilling scheme. Further, let $\mc I$ be a set containing $\mc K':=\{k:\sigma_k^2<c\}$ and choose $\iota$ to be constant and equal to this set (recall that $J\geq\lvert\mc K'\rvert$ by assumption). Define random variables $S_1,\ldots,S_K$, independent of each other and of the noise, by setting $S_k=0$ if $k\notin\mc K'$ and, for $k\in\mc K'$, by letting $S_k$ be Gaussian distributed with mean 0 and variance $\Lambda_k$.

The independence of $S_1,\ldots,S_K$ makes \eqref{eq:projection} an equality. Conditional on the event $\kappa=k$, the $k$-th coordinate output random variable is given by the formula
\[
	y_k=x+S_k+N_k,
\]
which is an additive Gaussian noise channel with noise variance $\sigma_k^2+\Lambda_k$. Applying \cite[Theorem 7.4.2]{Gallager}, we thus obtain
\[
	\sup_{X:\mbb E[X^2\vert\kappa=k]\leq\Gamma_k}I(X;Y_k\vert\kappa=k)=\frac{1}{2}\log\left(1+\frac{\Gamma_k}{\sigma_k^2+\Lambda_k}\right).
\]
So altogether, recalling the choice of $\Lambda_1,\ldots,\Lambda_K$, the right-hand side of \eqref{eq:Gleichheit} can be at most
\begin{align}
	&\sup_{(\kappa,\bm\Gamma)}\biggl\{\sum_{k\in\mc K'}P_\kappa(k)\frac{1}{2}\log\left(1+\frac{\Gamma_k}{c}\right)\notag\\
	&\qquad\qquad\qquad\qquad+\sum_{k\notin\mc K'}P_\kappa(k)\frac{1}{2}\log\left(1+\frac{\Gamma_k}{\sigma_k^2}\right)\biggr\}.\label{eq:endlich}
\end{align}
By replacing all $\sigma_k^2$ by $c$ and exploiting the concavity of the logarithm, one thus obtains that \eqref{eq:Gleichheit} is upper-bounded by
\begin{equation}\label{eq:nochendlicher}
	\frac{1}{2}\log\left(1+\frac{\Gamma}{c}\right),
\end{equation}
as claimed. Note that \eqref{eq:endlich} is equal to \eqref{eq:nochendlicher} if $\kappa$ is concentrated on one fixed $k\in\mc K'$ and the sender uses maximal power on this $k$, so \eqref{eq:nochendlicher} is also valid without frequency hopping. Note also that in the case of Gaussian noise and $J\geq\lvert\mc K'\rvert$, together with the lower bound proved before, we have thus obtained a closed-form characterization of \eqref{eq:Gleichheit}. This completes the proof of Theorem 2.

\section{Proof of direct part of Theorem 1}

The proof of Theorem 1 bases on the sufficient criterion for $C(\Gamma,\Lambda)=C_r(\Gamma,\Lambda)$ provided by the corollary to \cite[Theorem 4]{CsAVCgenalph}. To formulate this criterion, we first have to say what it means for the FH channel to be \textit{symmetrized} by a stochastic kernel.

A stochastic kernel $U$ with inputs from $\mc X$ and outputs in $\mc S$ gives, for every $(x,k)\in\mc X$, a probability measure $U(\cdot\vert x,k)$ on the Borel algebra of $\mc S$ such that for every Borel-measurable $\mc A\subset\mc S$, the mapping $(x,k)\mapsto U(\mc A\vert x,k)$ is measurable. $U(\cdot\vert x,k)$ is specified by its values on all pairs $(\mc I,\mc B)$, where $\lvert\mc I\rvert=J$ and $\mc B$ is a Borel set on $\mbb R^K$ such that for all $\mbf b\in\mc B$, it holds that $l\notin\mc I$ implies $b_l=0$. One can thus write
\[
	U(\mc I,\mc B\vert x,k)=U_1(\mc I\vert x,k)U_2(\mc B\vert x,k,\mc I).
\]
$U_1(\cdot\vert x,k)$ determines a random variable $\iota^U(x,k)$ on the set of subsets of $\mc K$ with cardinality $J$. $U(\cdot\vert x,k)$ then determines a random variable $\mbf S^U(x,k)$ which, conditional on the event $\iota^U(x,k)=\mc I$, has the distribution $U_2(\cdot\vert x,k,\mc I)$. These random variables give rise to a random jammer input, $Z^U_{x,k}:=\mbf S^U(x,k)\circ\mbf e_{\iota^U(x,k)}$. Thus any pair $(x',k')\in\mc X$ together with $U$ defines the following channel:
\[
	\mbf y=x\mbf e_k+\mbf Z_{x',k'}^U+\mbf N,
\] 
where $(x,k)\in\mc X$ is the sender input, the output set is $\mbb R^K$, and the noise is $\mbf Z_{x',k'}^U+\mbf N$.

By definition, the FH channel is \textit{symmetrized} by $U$ if all sender input pairs $(x,k)$ and $(x',k')$ satisfy
\[
	x\mbf e_k+\mbf Z_{x',k'}^U+\mbf N\stackrel{\mc D}{=}x'\mbf e_{k'}+\mbf Z_{x,k}^U+\mbf N,
\] 
where $\stackrel{\mc D}{=}$ means that the left-hand and the right-hand side have the same distribution. In particular, this implies
\[
	x\mbf e_k+\mbb E\bigl[\mbf Z_{x',k'}^U+\mbf N\bigr]=x'\mbf e_{k'}+\mbb E\bigl[\mbf Z_{x,k}^U+\mbf N\bigr]
\] 
or equivalently, as the noise is mean-zero,
\begin{equation}\label{eq:symmetr}
	x\mbf e_k+\mbb E\bigl[\mbf Z_{x',k'}^U\bigr]=x'\mbf e_{k'}+\mbb E\bigl[\mbf Z_{x,k}^U\bigr].
\end{equation}

To state the criterion for the equality of the $(\Gamma,\Lambda)$-capacities with and without common randomness, some more definitions are necessary. Let $\mc U_0$ be the class of stochastic kernels $U$ that symmetrize the FH channel and for which $\mbf Z_{x,k}^U$ has finite variance for all $(x,k)$. Let $\tilde{\mc X}\subset\mc X$ be finite and $(X,\kappa)$ be concentrated on $\tilde{\mc X}$. Assume that for every $(x,k)\in\mc X$, the conditional distribution of the random variable $Z_{X,\kappa}^U$ given $\{X=x,\kappa=k\}$ equals that of $Z_{x,k}^U$. Then define
\[
	\tau_{\tilde{\mc X}}(X,\kappa,\Lambda)=\frac{1}{\Lambda}\inf_{U\in\mc U_0}\mbb E\bigl[\lVert\mbf Z_{X,\kappa}^U\rVert^2\bigr].
\]
We also write $C_{r,\tilde{\mc X}}(\Gamma,\Lambda)$ for the common randomness assisted capacity of the FH channel with the same power constraints, but whose inputs are restricted to the finite subset $\tilde{\mc X}$ of $\mc X$.

By the corollary of \cite[Theorem 4]{CsAVCgenalph}, $C(\Gamma,\Lambda)=C_r(\Gamma,\Lambda)$ if there exists a family $\mc F$ of finite subsets of $\mc X$ satisfying that every finite subset of $\mc X$ is contained in some member of $\mc F$ and that for every $\tilde{\mc X}\in\mc F$, there is an $(X,\kappa)$ concentrated on $\tilde{\mc X}$ and satisfying $\mbb E[X^2]\leq\Gamma$ with $I(X\mbf e_\kappa;\mbf Y)=C_{r,\tilde{\mc X}}(\Gamma,\Lambda)$ and $\tau_{\tilde{\mc X}}(X,\kappa,\Lambda)>1$.

We will now closely follow the proof of \cite[Theorem 5]{CsAVCgenalph} to prove that the above criterion is satisfied for the FH channel if $\Gamma>\Lambda$. Fix $\Gamma,\Lambda>0$. Let $\tilde{\mc X}_0$ be a finite set satisfying $C_{r,\tilde{\mc X}_0}(\Gamma',\Lambda)>C_r(\Gamma,\Lambda)$ for some $\Gamma'>\Gamma$. Such a set exists by the fact (\cite[Theorem 4]{CsAVCgenalph}) that for all $\Gamma,\Lambda$,
\[
	C_r(\Gamma,\Lambda)=\sup_{\tilde{\mc X}\subset\mc X\text{ finite}}C_{r,\tilde{\mc X}}(\Gamma,\Lambda)
\]
and the lower bound on $C_r(\Gamma,\Lambda)$ of Theorem 2 showing that $C_r(\Gamma,\Lambda)$ tends to infinity as $\Lambda$ is fixed and $\Gamma$ tends to infinity. We choose $\mc F$ as the family of finite subsets $\tilde{\mc X}$ of $\mc X$ satisfying $\tilde{\mc X}_0\subset\tilde{\mc X}$ and 
\[
	\tilde{\mc X}=\bigcup_{k=1}^K\tilde{\mc X}_k\times\{k\},
\]
where $\tilde{\mc X}_k$ is symmetric about the origin. Obviously, every finite subset of $\mc X$ is contained in some $\tilde{\mc X}\in\mc F$. We first need to show that for every finite input set $\tilde{\mc X}\in\mc F$ there exist $C_{r,\tilde{\mc X}}(\Gamma,\Lambda)$-achieving channel input distributions which exhaust all the power and are symmetric on every frequency subband.

\begin{lem}\label{lem:finiteP}
	Let $\tilde{\mc X}\in\mc F$. Then there exists a pair $(X,\kappa)$ of random variables with values in $\tilde{\mc X}$ satisfying 
	\begin{equation}\label{eq:endlopt}
		\min_{\substack{(\iota,\mbf S):\\\mbb E[\lVert\mbf S\rVert^2]\leq\Lambda}}I(X\mbf e_\kappa;X\mbf e_\kappa+\mbf S\circ\mbf e_\iota+\mbf N)=C_{r,\tilde{\mc X}}(\Gamma,\Lambda)
	\end{equation}
	and
	\begin{align}
		\mbb E\bigl[X^2\bigr]&=\Gamma,\label{eq:fullpower}\\
		P_{X\vert\kappa}(\cdot\vert k)&=P_{-X\vert\kappa}(\cdot\vert k)\qquad(1\leq k\leq K)\label{eq:symmetry}
	\end{align}
	Here $P_{X\vert\kappa}$ denotes the conditional probability of $X$ given $\kappa$, and $P_{-X\vert\kappa}$ is defined analogously.
\end{lem}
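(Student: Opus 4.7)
The plan is to start from any maximizer of the min-max characterization of $C_{r,\tilde{\mc X}}(\Gamma,\Lambda)$ supplied by \cite[Theorem 4]{CsAVCgenalph} and then modify it to meet \eqref{eq:fullpower} and \eqref{eq:symmetry} without decreasing its value. Existence of such a maximizer is a compactness argument: the probability measures on the finite set $\tilde{\mc X}$ satisfying $\mbb E[X^2]\leq\Gamma$ form a compact convex subset of a finite-dimensional simplex, while the inner infimum is concave and upper semicontinuous in $P_{X,\kappa}$ under the hypotheses (H.1)--(H.4) of \cite{CsAVCgenalph} already noted to hold for the FH channel.

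For \eqref{eq:fullpower} I observe that every $\tilde{\mc X}\in\mc F$ may be enlarged to include, in at least one subband, a point of squared magnitude at least $\Gamma$; this additional requirement on $\mc F$ is still compatible with the density property that every finite subset of $\mc X$ lies in some member of $\mc F$. Then, given a maximizer with $\mbb E[X^2]=\Gamma^\circ<\Gamma$, a convex combination with a distribution placing appropriate mass on such a high-power point produces a new distribution with $\mbb E[X^2]=\Gamma$; concavity of the min-max functional in the sender distribution, together with monotonicity of $\Gamma\mapsto C_{r,\tilde{\mc X}}(\Gamma,\Lambda)$, ensures this convex combination is still a maximizer.

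For \eqref{eq:symmetry} I introduce a Rademacher variable $\epsilon\in\{-1,+1\}$ independent of the current $(X,\kappa)$ and of the noise, and set $X'=\epsilon X$, $\kappa'=\kappa$. Since $\tilde{\mc X}_k$ is symmetric about the origin by the definition of $\mc F$, $(X',\kappa')$ is again supported on $\tilde{\mc X}$; moreover $\mbb E[(X')^2]=\mbb E[X^2]=\Gamma$ and $P_{X'\vert\kappa}(\cdot\vert k)=\tfrac12P_{X\vert\kappa}(\cdot\vert k)+\tfrac12P_{-X\vert\kappa}(\cdot\vert k)$ is symmetric. That $(X',\kappa')$ still attains \eqref{eq:endlopt} will follow by combining concavity of the inner minimum in the sender distribution,
\[
\min_{(\iota,\mbf S)}I(X'\mbf e_{\kappa'};\mbf Y')\geq\tfrac12\min_{(\iota,\mbf S)}I(X\mbf e_\kappa;\mbf Y)+\tfrac12\min_{(\iota,\mbf S)}I(-X\mbf e_\kappa;\mbf Y^-),
\]
with the equality of the two terms on the right, which in turn is seen by first pre-symmetrizing the jammer with an auxiliary independent Rademacher -- a cost-free move for the jammer by convexity of $I$ in the channel kernel -- and then applying the output reflection $\mbf y\mapsto-\mbf y$.

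I expect the delicate step to be this last point: for a fixed jammer strategy $(\iota,\mbf S)$ the identity $I(-X\mbf e_\kappa;-X\mbf e_\kappa+\mbf S\circ\mbf e_\iota+\mbf N)=I(X\mbf e_\kappa;X\mbf e_\kappa+\mbf S\circ\mbf e_\iota+\mbf N)$ would need $\mbf N$ to be sign-symmetric, which is not assumed. The standard remedy, as in \cite[proof of Theorem~5]{CsAVCgenalph}, is to first replace $P_{\iota,\mbf S}$ by its sign-symmetrization (permissible since this only lowers $I$ and preserves the power budget), reducing to an effective jammer that is sign-symmetric; only then does the output reflection match the two terms. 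Keeping this bookkeeping clean for the compound alphabet $\mc X=\mbb R\times\mc K$, where the sign flip must also respect the subband index $\kappa$, is the main non-routine component of the proof.
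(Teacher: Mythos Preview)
Your treatment of the symmetry property \eqref{eq:symmetry} via a Rademacher average is essentially the paper's argument, only spelled out in more detail; the paper simply asserts that replacing $(X,\kappa)$ by $(-X,\kappa)$ leaves $\min_{(\iota,\mbf S)}I(X\mbf e_\kappa;\mbf Y)$ unchanged ``by the symmetry of the jammer input constraints'' and then averages the two distributions. Your caveat about noise sign-symmetry is well observed and, if anything, applies equally to the paper's one-line justification.

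The full-power step \eqref{eq:fullpower}, however, has a genuine gap. Concavity of $P\mapsto\min_{(\iota,\mbf S)}I(\cdot)$ together with \emph{weak} monotonicity of $\Gamma\mapsto C_{r,\tilde{\mc X}}(\Gamma,\Lambda)$ does not force a convex combination of a maximizer $P_0$ (with power $\Gamma^\circ<\Gamma$) and a point mass at a high-power atom to remain a maximizer: along that segment the concave objective may strictly decrease, and nothing you invoke rules out the possibility that every maximizer over $\{P:\mbb E_P[X^2]\leq\Gamma\}$ has power strictly below $\Gamma$. The paper closes this gap by exploiting a feature of $\mc F$ that you did not use: every $\tilde{\mc X}\in\mc F$ contains the fixed set $\tilde{\mc X}_0$, chosen beforehand (via the lower bound of Theorem~2, which shows $C_r(\Gamma,\Lambda)\to\infty$ as $\Gamma\to\infty$) so that $C_{r,\tilde{\mc X}_0}(\Gamma',\Lambda)>C_r(\Gamma,\Lambda)\geq C_{r,\tilde{\mc X}}(\Gamma,\Lambda)$ for some $\Gamma'>\Gamma$. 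This yields the \emph{strict} inequality $C_{r,\tilde{\mc X}}(\Gamma',\Lambda)>C_{r,\tilde{\mc X}}(\Gamma,\Lambda)$; hence if a $\Gamma$-maximizer had power $<\Gamma$, a small convex combination with a $\Gamma'$-maximizer would still satisfy the power constraint $\Gamma$ yet have strictly larger value, a contradiction. Strict monotonicity, traced back to Theorem~2, is the missing ingredient; merely adjoining one high-power atom to $\tilde{\mc X}$ does not supply it.
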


\begin{proof}
	Fix $\tilde{\mc X}\in\mc F$. By definition of $\tilde{\mc X}_0$ and $\Gamma'$ we have 
	\[
		C_{r,\tilde{\mc X}}(\Gamma',\Lambda)\geq C_{r,\tilde{\mc X}_0}(\Gamma',\Lambda)>C_{r,\tilde{\mc X}}(\Gamma,\Lambda).
	\]
	Let $(X,\kappa)$ and $(X',\kappa')$ assume values in $\tilde{\mc X}$ such that $(X,\kappa)$ achieves $C_{r,\tilde{\mc X}}(\Gamma,\Lambda)$ and $(X',\kappa')$ achieves $C_{r,\tilde{\mc X}}(\Gamma',\Lambda)$. Then any $(\tilde X,\tilde\kappa)$ distributed according to a nontrivial convex combination of $P_{(X,\kappa)}$ and $P_{(X',\kappa')}$ achieves a rate larger than $C_{r,\tilde{\mc X}}(\Gamma,\Lambda)$ because the left-hand side of \eqref{eq:endlopt} is concave in $P_{(X,\kappa)}$ by \cite[Lemma 5]{CsAVCgenalph}. Moreover $(X',\kappa')$ uses strictly more power than $(X,\kappa)$, so the second moment of $X$ must equal $\Gamma$. This proves \eqref{eq:fullpower}.
	
	If we replace $(X,\kappa)$ by $(-X,\kappa)$, then the left-hand side of \eqref{eq:endlopt} remains unchanged. This is due to the symmetry of the jammer input constraints. Hence a random input $(\tilde X,\tilde\kappa)$ distributed according to $\frac{1}{2}(P_{(X,\kappa)}+P_{(-X,\kappa)})$ satisfies \eqref{eq:endlopt}-\eqref{eq:symmetry}.
\end{proof}

Let $\tilde{\mc X}\in\mc F$ and $(X,\kappa)$ as in the Lemma. We now show that $\tau_{\tilde{\mc X}}(X,\kappa,\Lambda)>1$ if $\Gamma>\Lambda$. To do so, choose any $U\in\mc U_0$. Then for any $(x',k')\in\mc X$, using Jensen's inequality,
\begin{align}
	&\mathrel{\hphantom{=}}\mbb E\bigl[\lVert\mbf  Z_{X,\kappa}^U\rVert^2\bigr]\notag\\
	%&=\mbb E\bigl[\mbb E\bigl[\lVert\mbf  Z_{X,\kappa}^U\rVert^2\vert X,\kappa\bigr]\bigr]\notag\\
	&=\sum_{(x,k)\in\tilde{\mc X}}P_{(X,\kappa)}(x,k)\mbb E\bigl[\lVert\mbf  Z_{x,k}^U\rVert^2\bigr]\notag\\
	&\geq\sum_{(x,k)\in\tilde{\mc X}}P_{(X,\kappa)}(x,k)\left\lVert\mbb E\bigl[\mbf Z_{x,k}^U\bigr]\right\rVert^2\label{eq:symmetr1}.
\end{align}
As $U$ symmetrizes the FH channel, we can apply \eqref{eq:symmetr} and lower-bound \eqref{eq:symmetr1} by
\begin{align}
	&\sum_{(x,k)\in\tilde{\mc X}}P_{(X,\kappa)}(x,k)\left\lVert x\mbf e_k-x'\mbf e_{k'}+\mbb E[\mbf Z_{x',k'}^U]\right\rVert^2\notag\\
	&\geq\sum_kP_\kappa(k)\sum_{x\in\tilde{\mc X}_k}P_{X\vert\kappa}(x\vert k)\lvert x-x'e_{k'k}+\mbb E[Z_{x',k'}^U(k)]\rvert^2,\label{eq:last}
\end{align}
where we denote by $Z_{x,k}^U(k)$ the $k$-th component of $\mbf Z_{x,k}^U$. By \eqref{eq:symmetry}, $P_{X\vert\kappa}(\cdot\vert k)$ is symmetric for every $k$, so its mean equals 0 and 
\[
	\min_a\sum_{x\in\tilde{\mc X}_k}P_{X\vert\kappa}(x\vert k)\lvert x-a\rvert^2=\sum_{x\in\tilde{\mc X}_k}P_{X\vert\kappa}(x\vert k)\lvert x\rvert^2.
\]
Using this in \eqref{eq:last} and applying \eqref{eq:fullpower} yields the lower bound
\[
	\sum_{(x,k)\in\tilde{\mc X}}P(x,k)\lvert x\rvert^2=\mbb E[X^2]=\Gamma
\]
for $\mbb E\bigl[\lVert\mbf  Z_{X,\kappa}^U\rVert^2\bigr]$. We conclude that $\tau_{\tilde{\mc X}}(X,\kappa,\Lambda)>1$ for all $\tilde{\mc X}\in\mc F$ and the corresponding $(X,\kappa)$ if $\Gamma>\Lambda$, implying that $C(\Gamma,\Lambda)=C_r(\Gamma,\Lambda)$. As the common randomness assisted $(\Gamma,\Lambda)$-capacity is positive for positive $\Gamma$, this further implies that $C(\Gamma,\Lambda)>0$ if $\Gamma>\Lambda$, and the proof of the direct part of Theorem 1 is complete.

\section{Proof of converse for Theorem 1}

The converse follows the lines of the proof of the converse of \cite[Theorem 1]{CNAVCGauss}. Let $\Gamma\leq\Lambda$.  Let $(f_n,\varphi_n)$ be any $(n,\Gamma)$-code with $\lvert\mc M_n\rvert\geq 2$ messages. We will prove that there exists a jammer input sequence $\tilde S\circ\tilde E$ such that $\bar e(f_n,\varphi_n,\tilde S\circ\tilde E)\geq 1/4$. This sequence will be found among the following inputs. Assume that $f_n(m)=(x_1\mbf e_{k_1},\ldots,x_n\mbf e_{k_n})$. Then let $\tilde E(m)$ be the matrix whose $i$-th column is $\mbf e_{k_i}$, and let the $i$-th column of the matrix $\tilde S(m)$ equal $x_i\mbf e_{k_i}$. This gives a set $\{\tilde S(m)\circ\tilde E(m):m\in\mc M_n\}$ of jammer input sequences. Note that the power of any of these is at most $\Lambda$.

Observe that for $m,m'\in\mc M_n$ with $m\neq m'$,
\begin{multline*}
	\mbb P[\varphi_n(f_n(m)+\tilde S(m')\circ\tilde E(m')+\tilde{N})\neq m]\\
	=\mbb P[\varphi_n(f_n(m')+\tilde S(m)\circ\tilde E(m)+\tilde{N})\neq m]\\
	\geq 1-P[\varphi_n(f_n(m')+\tilde S(m)\circ\tilde E(m)+\tilde{N})\neq m'].
\end{multline*}
Thus
\begin{align*}
	&\mathrel{\hphantom{\geq}}\frac{1}{\lvert\mc M_n\rvert}\sum_{m\in\mc M_n}\bar e(f_n,\varphi_n,\tilde S(m)\circ\tilde E(m))\\
	&\geq\!\frac{1}{\lvert\mc M_n\rvert^2}\!\sum_{m,m'\in\mc M_n}\!\!\!\mbb P[\varphi_n(f_n(m)+\tilde S(m')\!\circ\!\tilde E(m')+\tilde{N})\neq m]\\
	&\geq\frac{1}{\lvert\mc M_n\rvert^2}\cdot\frac{\lvert\mc M_n\rvert(\lvert\mc M_n\rvert-1)}{2}\geq\frac{1}{4}.
\end{align*}
Therefore one of the jammer inputs $\tilde S(m)\circ\tilde E(m)$ makes the average error incurred by the code $(f_n,\varphi_n)$ at least one quarter. This proves the converse of Theorem 1.

\section{Proof of the corollary to Theorem 1}

The second claim of the corollary is obvious from Theorem 1. The first statement follows from \cite[Theorem 5]{CsAVCgenalph}, which says that an additive-noise channel with $\mbb R$ as sender, jammer and output alphabet has positive capacity (then equal to the common randomness assisted capacity) if and only if the sender power exceeds the jammer power. So if both the sender and the jammer in the FH channel concentrate their power on any frequency band $k\in\mc K$ and $\Gamma>\Lambda$, already a positive capacity equal to 
\[
	\max_{X:\mbb E[X^2]\leq\Gamma}\min_{S:\mbb E[S^2]\leq\Lambda}I(X;X+S+N_k)
\]
and lower-bounded by 
\[
	\frac{1}{2}\log\left(1+\frac{\Gamma}{\sigma_k^2+\Lambda}\right)>0
\]
will be achievable. In particular, this rate can be obtained without frequency hopping. On the other hand, if no transmission is possible over the subchannels, then $\Gamma\leq\Lambda$, and the FH channel also has zero capacity.

\section{Discussion}\label{sect:disc}

For non-discrete AVCs, there is no general statement that capacity without common randomness always equals 0 or the common randomness assisted capacity like the Ahlswede dichotomy in \cite{A1} for discrete AVCs. Thus it is not possible to justify Theorem 1 just by observing that the capacity of every subchannel is positive if $\Gamma>\Lambda$.

Like \cite{ZWLi,ZLii} we assume here that the receiver simultaneously listens on all frequencies. A different approach is taken in \cite{SPCCFH,EWFH}, where the receiver listens randomly on only one frequency band at a time. The above analysis can be performed in a similar way for this situation and leads to analogous results: The capacity without common randomness shared between sender and receiver is positive if and only if the sender power exceeds the jammer power. Of course, the capacity will in general be smaller than if the receiver listens on all frequencies.

The converse shows that in order to find a good jamming sequence, the jammer needs knowledge of the channel and the transmission protocol. Further, it should know when the transmission of a codeword starts, so it has to be synchronized with the sender. If this is given, then the successful jamming strategy in the case $\Gamma\leq\Lambda$ is to confuse the receiver: There exists a legitimate codeword such that if the jammer inputs this into the FH channel, the receiver cannot distinguish the sender's messages. 

The case of a jammer listening to the sender's input into the channel like in \cite{SPCCFH,EWFH} was not treated here because there exist few results on AVCs in this direction.

\end{document}